\newtheorem{proposition}{Proposition}
\newtheorem{remark}{Remark}
\newcommand{\Ptx}{\mathcal{P}_{\mathrm{tx}}}
\newcommand{\Prx}{\mathcal{P}_{\mathrm{rx}}}
\newcommand{\Pth}{\mathcal{P}_{\mathrm{th}}}
\newcommand{\Pinfo}{\mathcal{P}_{\mathrm{tx},2}}
\newcommand{\Peh}{\mathcal{P}_{\mathrm{tx},1}}
\newcommand{\Pave}{\mathcal{P}_{\mathrm{ave}}}
\newcommand{\Ppeak}{\mathcal{P}_{\mathrm{peak}}}
\newcommand{\ser}{\mathrm{SER}}
\newcommand{\aveser}{\overline{\mathrm{SER}}}
\newcommand{\avessr}{\overline{\mathrm{SSR}}}
\newcommand{\rhoPS}{\rho_{\mathrm{PS}}}
\newcommand{\rhoTS}{\rho_{\mathrm{TS}}}
\newcommand{\mypower}{\bar{\mathcal{P}}}
\newcommand{\mypowerPS}{\bar{\mathcal{P}}_{\mathrm{PS}}}
\newcommand{\mypowerPSmax}{\bar{\mathcal{P}}_{\mathrm{PS},\max}}
\newcommand{\mypowerPSPSK}{\bar{\mathcal{P}}^{\mathrm{PSK}}_{\mathrm{PS}}}
\newcommand{\mypowerTS}{\bar{\mathcal{P}}_{\mathrm{TS}}}
\newcommand{\mypowerTSmax}{\bar{\mathcal{P}}_{\mathrm{TS},\max}}
\newcommand{\Plow}{P_{\mathrm{low}}}
\newcommand{\Phigh}{P_{\mathrm{high}}}
\newcommand{\qlow}{q_{\mathrm{low}}}
\newcommand{\qhigh}{q_{\mathrm{high}}}
\DeclareMathOperator*{\mymax}{\mathsf{Maximize}}
\DeclareMathOperator*{\myst}{\mathsf{Subject\ to}}
\begin{document}
\title{SWIPT with 
	Practical Modulation and RF Energy Harvesting Sensitivity		\vspace{-0.5cm}}

\author{\IEEEauthorblockN{Wanchun Liu$^*$, Xiangyun Zhou$^*$, Salman Durrani$^*$, Petar Popovski$^\dagger$}
	\IEEEauthorblockA{$^*$Research School of Engineering, The Australian National University, Canberra, Australia
		\\
		$^\dagger$Department of Electronic Systems, Aalborg University, Denmark
		\\
		Emails: \{wanchun.liu, xiangyun.zhou, salman.durrani\}@anu.edu.au,
		and petarp@es.aau.dk.
		\vspace{-0.5cm}
	}
}

\maketitle
\begin{abstract}
\let\thefootnote\relax\footnote{The work of W. Liu, X. Zhou and S. Durrani was supported by the Australian Research Council's Discovery Project Funding Scheme (project number DP140101133).
The work of P. Popovski has been in part supported by the European
Research Council (ERC Consolidator Grant Nr. 648382 WILLOW) within the
Horizon 2020 Program.
	}
In this paper, we investigate the performance of simultaneous wireless information and power transfer (SWIPT) in a point-to-point system, adopting practical $M$-ary modulation.
We take into account the fact that the receiver's radio-frequency (RF) energy harvesting circuit can only harvest energy when the received signal power is greater than a certain sensitivity level.
For both power-splitting (PS) and time-switching (TS) schemes,
we derive the energy harvesting performance as well as the information decoding performance for the Nakagami-$m$ fading channel.
We also analyze the performance tradeoff between energy harvesting and information decoding by studying an optimization problem, which maximizes the information decoding performance and satisfies a constraint on the minimum harvested energy.
Our analysis shows that 
(i) for the PS scheme, modulations with high peak-to-average power ratio achieve better energy harvesting performance,
(ii) for the TS scheme, it is desirable to concentrate the  power for wireless power transfer in order to minimize the non-harvested energy caused by the RF energy harvesting sensitivity level, and
(iii) channel fading is beneficial for energy harvesting in both PS and TS schemes.
\end{abstract}
\vspace{-0.1cm}
\section{Introduction}
Simultaneous wireless information and power transfer (SWIPT) has recently attracted significant attention~\cite{kaibin_zhou,WPT_survey}.
Practical SWIPT receivers for energy harvesting (EH) and information decoding (ID) have been proposed using power-splitting (PS) and time-switching (TS) schemes.
Current studies on SWIPT often consider an ideal information transmission model (i.e., Gaussian signaling) and investigate the tradeoff between the information capacity and harvested energy~\cite{ZhangHo,LiangLiuNew,MorsiJournal}.
In reality, SWIPT receivers are typically energy constrained and may be incapable of performing high-complexity  capacity-achieving coding/decoding scheme. 
Recently, SWIPT with practical coherent modulations was analyzed in \cite{XunZhou}.

Another commonly applied assumption in the SWIPT literature is that the average received signal power at the radio-frequency (RF) EH circuit is well above the RF-EH sensitivity level~\cite{ZhangHo,XunZhou,LiangLiuNew}.
Hence, these studies ignore the impact of the RF-EH sensitivity level.
In reality, practical state-of-the-art RF-EH circuits have power sensitivity requirement in the range of $-10$~dBm to $-30$~dBm~\cite{WPT_survey}.
Guaranteeing a much higher received signal power than the RF-EH sensitivity level often requires an extremely short communication range, which largely limits the application of SWIPT. Therefore, we consider a more general SWIPT system with $M$-ary modulation where the received signal power is not necessarily larger than the RF-EH sensitivity level. \emph{Since different constellation symbols may have different power levels, the amount of harvested energy may vary from symbol by symbol, and it is possible that some symbols can activate the RF-EH circuit but others cannot.}
Hence, it is important to accurately capture the effect of the RF-EH sensitivity level in analyzing the performance of SWIPT.

In this paper, we consider a transmitter-receiver pair adopting SWIPT with either PS or TS scheme.
In the PS scheme, the transmitter transfers modulated data signal to the receiver, then the receiver splits the received signal into two separate streams,
one to draw energy and one to acquire information, respectively.
In the TS scheme, for a given percentage of time, the transmitter transfers energy signal to the receiver, and the receiver draws energy from it. 
For the remaining portion of time, 
the transmitter transfers modulated data signal to the receiver, and the receiver acquires information from the signal.
Assuming a Nakagami-$m$ fading channel, we derive the average harvested power and the average number of successfully transmitted symbol per unit time (i.e., the symbol success rate), at the receiver.
We study the performance tradeoff between ID and EH using these metrics.
Our analysis provides the following interesting insights:
\begin{itemize}
	\item For the PS scheme, a modulation scheme with high peak-to-average power ratio (PAPR) leads to a better EH performance, e.g., $M$-PAM performs better than $M$-QAM, which performs better than $M$-PSK.
	This modulation performance order is very different from that of ID.
	\item For the TS scheme, we propose an optimal energy signal for the power transfer phase, which maximizes the available harvested energy. This is done by minimizing the duration of time for wireless power transfer since in this way one minimizes the amount of energy that is not harvested due to the impact of the RF-EH sensitivity~level.
	\item For both the PS and TS schemes, we show that channel fading is beneficial for EH when the RF-EH sensitivity level is considered in the analysis.
	This is in contrast~to previous studies, which ignored the sensitivity~level.
\end{itemize}

\vspace{-0.3cm}
\section{System Model}
We consider a SWIPT system consisting of a transmitter (Tx) and a receiver (Rx).
The receiver comprises an ID circuit and an RF-EH circuit~\cite{ZhangHo}.
Each node is equipped with a single omnidirectional antenna. 
The transmitter and receiver adopt block-wise operation with block time duration,~$T$.
We assume that the receiver is located in the far field, at a distance $d$ from the transmitter.
Thus, the channel link between the two nodes is composed of large scale path loss with exponent $\lambda$ and small-scale Nakagami-$m$ fading. 
Note that $m$ represents the fading parameter, which controls the severity of the fading.
The fading channel gain, $h$, is assumed to be constant within
one block time and independent and identically
distributed from one block to the next~\cite{ZhangHo,LiangLiuNew,Tianqing}. 
We assume instantaneous channel state information (CSI) is available only at the receiver.

We consider that the transmitter adopts a practical modulation scheme for information transmission (IT).
Let the signal constellation set be denoted by $\mathcal{X}$. The size of $\mathcal{X}$ is denoted by $M$ with $M=2^l$, and $l\geq 1$ being an integer.
The $i$th constellation point in $\mathcal{X}$ is denoted by $x_i$, $i=1,2,...,M$, with equal probability $p_i = 1/M$,\footnote{Consideration of modulation schemes with non-uniform probability distribution among all symbols is outside the scope of this work.} 
and the average power of signal set $\mathcal{X}$ is normalized to one, i.e., $\sum_{i=1}^{M} \vert x_i\vert^2 /M = 1$.
In this work, we consider three most commonly used coherent modulation schemes for IT: $M$-PSK, $M$-PAM, and $M$-QAM.\footnote{For $M$-QAM, we assume that $M=2^l$, and $l$ is an even integer.}

At the receiver,
during a symbol period $T_s$,
assuming the receive power at the RF-EH circuit is $\Prx$,
the amount of harvested energy can be represented as~\cite{Tianqing}
\vspace{-5pt}
\begin{equation} \label{RFEH_model}
\mathcal{E} = \eta T_s \left(\Prx-\Pth\right)^+,
\vspace{-7pt}
\end{equation}
where $0 \leq \eta \leq 1$ is the RF-EH efficiency, and $(z)^+ = \max \left\lbrace z,0 \right\rbrace$. We assume that the harvested energy at the receiver is stored in an ideal battery~\cite{ZhangHo,Tianqing,RFEH}.
Note that in \eqref{RFEH_model},
according to the existing studies~\cite{Tianqing,RFEH,NewThreshold},
the RF-EH circuit can only harvest energy when its 
receive signal power, $\Prx$, is greater than the RF-EH sensitivity level,
$\Pth$, and the harvested energy is proportional to $\Prx- \Pth$.

\subsection{PS and TS Schemes}
The operation of SWIPT using PS or TS scheme is described as follows:
\begin{itemize}
	\item \textit{PS}: In each time block, the transmitter sends modulated data signal with average transmit power $\Ptx$. The receiver splits the received signal with a
	PS ratio, $\rhoPS$, for separate EH and ID.
	\item \textit{TS}: 
	Each time block is divided into a power transfer (PT) phase and an IT phase.
	First, the transmitter transmits an energy signal with average transmit power $\Peh$ during the first $\rhoTS T$ seconds, i.e., PT phase. The receiver harvests energy from this received signal.
	Then, the transmitter transmits modulated signal with average transmit power $\Pinfo$ during the remaining $(1-\rhoTS)T$ seconds, i.e., IT phase, and the receiver decodes the received information.
%
\end{itemize}

\subsection{Transmit Power Constraints}
We consider both average and peak power constraints at the transmitter for SWIPT~\cite{ZhangHo,LiangLiuNew}, denoted by $\Pave$ and $\Ppeak$, respectively.
From~\cite{PpeakHigh} and references therein, 
the peak-to-average ratio of a practical RF circuit can be as large as $13$~dB.
In this paper, we assume that $\Ppeak/\Pave \geq 3$, i.e., $4.77$~dB.
\subsubsection{PS}
First, the average transmission power $\Ptx$ should satisfy the average power constraint, i.e., 
$\Ptx \leq \Pave$.

Second, different transmitted symbols have different power. Thus, the highest transmit power of the symbols in $\mathcal{X}$ should also be no larger than $\Ppeak$. 
From~\cite{Book_fading}, the PAPR of $M$-PSK/PAM/QAM 
are given by
\begin{equation} \label{PAPR_modulation}
\varphi_{\mathrm{PSK}} =1,\varphi_{\mathrm{PAM}} =3 \frac{M-1}{M+1}, \varphi_{\mathrm{QAM}}=3 \frac{\sqrt{M}-1}{\sqrt{M}+1},
\end{equation}
respectively.
Therefore, peak power constraint should be satisfied as 
$
\varphi_{j} \Ptx \leq \Ppeak,\ j=\mathrm{PSK, PAM, QAM}.
$

From~\eqref{PAPR_modulation}, we see that the PAPR is always less than~$3$.
Thus, for the considered modulation schemes, the peak power constraint is automatically satisfied, as long as the average power requirement is met, i.e., $\Ptx \leq \Pave$.

\subsubsection{TS}
First, the transmit power should satisfy average power constraint as $\rhoTS\Peh +(1-\rhoTS)  \Pinfo \leq \Pave$.

Second, for the PT phase of the TS scheme, the peak power of the energy signal should be no larger than  $\Ppeak$, and a constraint for the average power of the energy signal, $\Peh\leq \Ppeak$, is satisfied naturaly.
For the IT phase of the TS scheme, as discussed above, peak power constraint should be satisfied~as
\begin{equation} \label{varphi}
\varphi_{j} \Pinfo \leq \Ppeak,\ j=\mathrm{PSK, PAM, QAM}.
\end{equation}

\subsection{Metrics}
For EH, we use the average harvested power $\mypower$ as a performance metric~\cite{MorsiJournal,LiangLiuNew}.
For ID, we adopt the average success symbol rate, $\avessr$, which is related to the average symbol error rate, $\aveser$, as the performance metric. The $\avessr$ measures the average number of successfully transmitted symbols per unit time.
Using $\mypower$ and $\avessr$, we investigate the performance tradeoff between EH and ID.
The exact mathematical definitions of $\mypower$ and $\avessr$ are given in Sections III and IV for the PS and TS schemes, respectively.
%
%

\section{Power Splitting}
Following \cite{ZhangHo,AliOld}, after PS, the received RF signal at the RF-EH circuit is 
\begin{equation} \label{PS_EH_signal}
y(t)= h(t) \sqrt{\frac{\rhoPS \Ptx  }{d^{\lambda}}}\ x(t) + \tilde{n}(t),
\end{equation}
where 
$h(t)$ represents the channel fading gain as described in Sec. II,
$x(t)$ is the modulated information signal sent from the transmitter, and $\tilde{n}(t)$ is the circuit noise.

For ID, the RF signal is down converted to baseband. The received signal and the signal noise ratio (SNR) at the ID circuit of the receiver are given by\footnote{In \eqref{ID_signal}, for notational simplification, we have represented the baseband signals, $y[k]$, $x[k]$ and $n[k]$ as $y$, $x$ and $n$, respectively, where $k$ denotes symbol index.}
\begin{equation}\label{ID_signal}
y= h \sqrt{\frac{(1-\rhoPS) \Ptx }{d^{\lambda}} }\ x +n,
\end{equation}
and
\vspace{-5pt}
\begin{equation}
\gamma(\nu) = \frac{(1-\rhoPS) \Ptx \nu}{d^{\lambda} \sigma^2},
\end{equation}
respectively,
where $n$ is the AWGN at the ID circuit with power $\sigma^2$,
and $\nu = \vert h \vert^2$ is the fading power gain of the channel.
Since we consider Nakagami-$m$ fading, the probability density function (pdf) of $\nu$ is given by
\begin{equation} \label{Nakagami_fading}
	f_{\nu}(v) = \frac{v^{m-1} m^m}{\Gamma(m)} \exp\left(-mv\right),\ m=1,2,3....
\end{equation}

In the following, we analyze the performance of EH and ID with $M$-PSK, $M$-PAM and $M$-QAM schemes.

\vspace{-5pt}
\subsection{RF Energy Harvesting with the PS Scheme}

Using \eqref{RFEH_model} and \eqref{PS_EH_signal}, and assuming that the amount of energy harvested from the circuit noise $\tilde{n}$ is negligible~\cite{ZhangHo,XunZhou},
the harvested energy under fading power gain $v$ during one symbol time $T_s$ is given by
%
\vspace{-5pt}
\begin{equation} \label{PS_energy_first}
\mathcal{E}(x_i,v) = \eta T_s \left( \frac{\rhoPS \mathcal{P}_i v }{d^{\lambda}} - \Pth\right)^+,
\vspace{-5pt}
\end{equation}
where $\mathcal{P}_i$ is the transmit power for the $i$th constellation point, and $\sum_{i=1}^{M} \mathcal{P}_i/M = \Ptx$. From~\cite{Book_fading}, $\mathcal{P}_i$ can be obtained as
%
\vspace{-5pt}
\begin{equation} \label{P_i}
\begin{aligned}
&\mathcal{P}_i = \left\lbrace 
\begin{aligned}
&\Ptx, &&\hspace{-1.5cm}\text{ for $M$-PSK}\\
&\frac{3\Ptx}{M^2-1}\left(2 \left\lceil \left\vert i-\frac{M+1}{2} \right\vert \right\rceil  - 1 \right)^2, &&\hspace{-1.5cm}\text{ for $M$-PAM}\\
&\frac{3\Ptx}{2(M-1)} \left((2 \left\lceil \left\vert \left\lceil \frac{i}{\sqrt{M}} \right\rceil-\frac{\sqrt{M}+1}{2} \right\vert \right\rceil - 1)^2 \right.\\
&\left.+(2 \left\lceil \left\vert \left(i\!\mod \sqrt{M} \right)-\frac{\sqrt{M}+1}{2} \right\vert   \right\rceil - 1)^2\right)\!\!, \\
& &&\hspace{-1.5cm}\text{for $M$-QAM}
\end{aligned}
\right.\\
\end{aligned}
\end{equation}
where $\lceil \cdot \rceil$ is the ceiling operator, and the operator $z\!\!\mod{Z}$ has the definition as follows: if $z$ is an integer multiple of $Z$, 
$z\mod{Z}$ is $Z$, while in other cases, it is equal to $z$~modulo~$Z$.

From \eqref{PS_energy_first},
the average harvested power can be calculated as
\begin{equation} \label{ave_EH_power_first}
\begin{aligned}
\mypowerPS 
& = \frac{1}{T_s}\sum_{i=1}^{M} p_i \int_{0}^{\infty} \mathcal{E}(x_i,v) f_{\nu}(v) \mathrm{d}v  \\
&\hspace{-0.7cm} =\eta \sum_{i=1}^{M} p_i
\int_{\bar{v}_i}^{\infty}
\left( \frac{\rhoPS \mathcal{P}_i v }{d^{\lambda}} - \Pth\right) f_{\nu}(v) \mathrm{d} v,\\
\vspace{-5pt}
& \hspace{-0.7cm}\!=\!\eta \! \sum_{i=1}^{M} p_i
\!\left(\!
\frac{\rhoPS \mathcal{P}_i }{d^{\lambda}}\!\! \left(1\!\!-\!\!\int_{v=0}^{\bar{v}_i} \!\!v f_{\nu}(v) \mathrm{d} v\!\right) \!
\!-\!\Pth\! \left(\!1\!- \!\!\int_{v=0}^{\bar{v}_i}\! f_{\nu}(v) \mathrm{d} v\!\right)\!
\right)\!,
\end{aligned}
\vspace{-5pt}
\end{equation}
where $p_i$ is the transmit probability of symbol $x_i$, i.e., $1/M$, and
$
\bar{v}_i = {\Pth d^{\lambda}}/{\rhoPS \mathcal{P}_i}.
$
By taking \eqref{Nakagami_fading}, \eqref{P_i} and $p_i=1/M$ into \eqref{ave_EH_power_first}, after some simplification, we get
\begin{equation} \label{PS_ave_EH_power}
\begin{aligned}
\vspace{-5pt}
\mypowerPS 
&= \frac{\eta}{M} \sum_{i=1}^{M}\frac{\rhoPS \mathcal{P}_i}{d^{\lambda}} \exp\left(-m \Pth \frac{d^{\lambda}}{\rhoPS \mathcal{P}_i}\right) \times\\
\vspace{-5pt}
&\hspace{0.5cm}\left(1 + \sum\limits_{k=0}^{m-1} \left(\Pth \frac{d^{\lambda}}{\rhoPS \mathcal{P}_i} \right)^{k+1} \frac{m^k (m-k-1)}{\left(k+1\right)!}
\right).
\end{aligned}
\end{equation}

\underline{Special case}: For $m=1$, i.e., Rayleigh fading channel, 
we have
\vspace{-5pt}
\begin{equation} \label{PS_rayleigh}
\mypowerPS = \frac{\eta}{M} \sum_{i=1}^{M} \frac{\rhoPS \mathcal{P}_i}{d^{\lambda}} \exp(- \frac{\Pth d^{\lambda}}{\rhoPS \mathcal{P}_i}).
\end{equation}
For $M$-PSK, \eqref{PS_rayleigh} further simplifies to
\begin{equation} \label{PS_PSK}
\mypowerPSPSK = \eta \frac{\rhoPS \Ptx}{d^{\lambda}} \exp\left(- \frac{\Pth d^{\lambda}}{\rhoPS \Ptx}\right).
\end{equation}

\begin{remark}
For the PS scheme, a modulation scheme with higher PAPR, or a fading channel with larger variation in its power gain, increases the average harvested power.	
\end{remark}
The above observations in Remark 1 will be verified in Sec.~V.
They can be explained using the analysis as follows:

Due to the RF-EH sensitivity level, $\Pth$, \eqref{PS_energy_first} is a convex function w.r.t. symbol transmit power, $\mathcal{P}_i$, and channel power fading gain $v$.
Thus, based on Jensen's inequality, 
taking \eqref{PS_energy_first} into the first line of \eqref{ave_EH_power_first}, we have
\begin{equation} \label{my_jensen}
\begin{aligned}
\mypowerPS 
&\geq 
\eta \left(\frac{\rhoPS}{d^\lambda} \sum_{i=1}^{M}p_i \mathcal{P}_i \int_{0}^{\infty} v f_{\nu} (v) \mathrm{d}v -\Pth \right)^+.
\vspace{-15pt}
\end{aligned}
\end{equation}
We see that the equality holds in \eqref{my_jensen}, i.e., average harvested power $\mypowerPS$ is minimized, iff the variance of both the symbol power in the signal set and the channel fading power gain equal to zero, i.e., $\mathcal{P}_i = \Ptx$ for all $i$, and $v$ is always equal to one (non-fading channel). 

For practical modulations, $M$-PSK results in the worst EH performance since it has constant symbol power (lowest PAPR and zero variance of symbol power).
In addition, our numerical results suggest that $M$-PAM (which has highest PAPR) performs better than $M$-QAM, which performs better than $M$-PSK. This will be verified in Sec. V.

For the Nakagami-$m$ fading, we can easily prove using \eqref{PS_ave_EH_power} that $\mypowerPS$ decreases with increasing $m$.
As $m$ increases, the fading variance decreases.
Hence, a channel with larger fading variance (smaller $m$) increases the average harvested power.

\subsection{Information Decoding with the PS Scheme}
The average symbol success rate, $\avessr$, can be calculated~as
\begin{equation} \label{PS_throughput_final}
\avessr = \int_{v=0}^{\infty} \left(1- {\ser(v)}\right) f_{\nu}(v) \mathrm{d}v
= 1-\aveser.
\end{equation}
where
\begin{equation} \label{PS_AVE_SER}
\aveser = \int_{v=0}^{\infty} \mathrm{SER}(v) f_{\nu}(v) \mathrm{d}v,
\end{equation}
and $\ser(v)$ is given by \cite{Book_fading} as\footnote{For simplicity, we adopt an accurate and widely used approximate closed-form expression for $M$-PSK.}
\begin{equation} \label{PS_SER}
	\mathrm{SER}(v)= \left\lbrace 
	\begin{aligned}
		&2 \ Q\left(\sqrt{2 g_{\mathrm{PSK}} \gamma(v)}\right), \hspace{1.8cm} \text{for $M$-PSK}\\
		&2 {(M-1)}/{M} \ Q\left(\!\sqrt{2g_{\mathrm{PAM}} \gamma(v)}\right), \text{for $M$-PAM}\\
		&4\left(1-{1}/{\sqrt{M}}\right)\ Q\left(\sqrt{2g_{\mathrm{QAM}} \gamma(v)}\right)\\
		\vspace*{-4pt}
		&-4\left(1-{1}/{\sqrt{M}}\right)^2\ Q^2\left(\sqrt{2g_{\mathrm{QAM}} \gamma(v)}\right), \\
		\vspace{-7pt}
		&\hspace{4.9cm}\text{for $M$-QAM}\\
	\end{aligned}
	\right.
\end{equation}
where $Q(\cdot)$ is the Q-function, $M$ is the modulation order, $g_{\mathrm{PSK}} = \sin^2(\pi/M)$, 
$g_{\mathrm{PAM}} = \frac{3}{M^2-1}$
and $g_{\mathrm{QAM}} = \frac{3}{2(M-1)}$.

\begin{remark}
	\textnormal{Given $\rhoPS$,  $M$ and $v$, it is known that in the high SNR regime, $M$-QAM outperforms $M$-PSK, which outperforms $M$-PAM on SER~\cite{Book_fading}. Thus, from \eqref{PS_throughput_final}, the same order holds for average success symbol rate.}
\end{remark}

\subsection{Optimal Transmission Strategy for the PS Scheme}
It is interesting to investigate the problem that given a certain required average harvested power level, $\mypower_0$, what is the achievable average symbol success rate ($\avessr$). 
For the PS scheme, there are two design parameters: the PS ratio, $\rhoPS$, and the transmit power, $\Ptx$.

We consider the following optimization problem:
\begin{equation} \label{PS_Opt_1}
\begin{aligned}
\mathbf{P1}:\ \ &\mymax_{\rhoPS,\Ptx} \ \ \ &&\avessr = 1-\aveser \\
&\myst\ &&\mypowerPS \geq \mypower_0,  \\
& &&0 \leq \rhoPS  \leq 1,
 \ 0 \leq \Ptx  \leq \Pave.
\end{aligned}
\end{equation}
where 
$\mypowerPS$ is defined in \eqref{PS_ave_EH_power},
and the required average harvested power, $\mypower_0$, is no higher than the maximum range of average harvested power, $\mypowerPSmax$, which is obtained by taking $\rhoPS = 1$ into \eqref{PS_ave_EH_power}.

The optimization problem can be solved as follows. 
From \eqref{PS_throughput_final} and \eqref{PS_ave_EH_power}, it is easy to see that both $\mypowerPS$ and $\avessr$ are monotonically increasing functions w.r.t. $\Ptx$. 
Meanwhile, $\mypowerPS$ and $\avessr$ increase and decrease with $\rhoPS$, respectively.
Thus, in $\mathbf{P1}$, we let $\Ptx = \Pave$ and $\mypowerPS = \mypower_0$.
In order to obtain the optimal $\rhoPS$, $\rhoPS^{\star}$, which yields $\mypowerPS = \mypower_0$, a method of bisection for linear search can be adopted.
Taking $\rhoPS^{\star}$ into \eqref{PS_throughput_final}, the maximum achievable average symbol success rate, $\avessr^{\star}\!$, is obtained.

\underline{Special case}: For $M$-PSK scheme with Rayleigh fading channel,
we can obtain a closed-form expression for $\rhoPS^{\star}$.
By taking $\Ptx = \Pave$ into \eqref{PS_PSK} and letting \eqref{PS_PSK}~$=\mypower_0$, and solving we get
\begin{equation}
\rhoPS^{\star} =\frac{\Pth d^{\lambda}}{\Pave W_0\left(\frac{\eta \Pth}{ \mypower_0}\right)},
\end{equation}
where $W_0(\cdot)$ is the principle branch of the Lambert W function.

\section{Time Switching}
In the PT phase, i.e., $\rhoTS$ portion of a time block, 
the transmitter transmits a pre-designed energy signal with average power $\Peh$.
Thus, the received RF signal at the RF-EH circuit is given by
\begin{equation}
y(t)= h(t) \sqrt{ \frac{1}{d^{\lambda}}}\ \omega(t) + \tilde{n}(t),
\end{equation}
where $\omega(t)$ is the pre-designed energy signal, which will be investigated in the following subsection.

In the IT phase, i.e., $1-\rhoTS$ portion of a time block, the baseband received signal and the SNR at the ID circuit of the receiver are given by
\begin{equation}
y= h \sqrt{\frac{ \Pinfo }{d^{\lambda}} }\ x +n,
\end{equation}
and
\begin{equation}\label{TS_snr}
\gamma(\nu) = \frac{\Pinfo \nu}{d^{\lambda} \sigma^2},
\end{equation}
respectively.

In the following, we analyze the performance of EH and~ID.

\subsection{RF Energy Harvesting with the TS Scheme}
If the RF-EH sensitivity level is zero, any energy signal design will achieve the same harvested energy. On the other hand, with a non-zero RF-EH sensitivity level, different energy signals perform differently.
Here, we propose an optimal energy signal which maximizes the harvested energy.
\begin{proposition}
	\textnormal{For the PT phase, given the average transmit power, $\Peh$, and the RF-EH sensitivity level, $\Pth$, an optimal energy signal which achieves maximum harvested energy, is the signal with power $\Ppeak$ for the first ${\Peh}/{\Ppeak}$ portion of the phase, 
	and with power zero for the rest of the phase.}
\end{proposition}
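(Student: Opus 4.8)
The plan is to cast the choice of energy signal as an optimization over power profiles and then exploit the convexity of the RF-EH characteristic in \eqref{RFEH_model} to show that a ``bang--bang'' profile is optimal. First I would parametrize an arbitrary admissible energy signal $\omega(t)$, $t\in[0,\rhoTS T]$, by its instantaneous power $P(t)=|\omega(t)|^2$, so that the peak constraint reads $0\le P(t)\le\Ppeak$ and the average constraint reads $\frac{1}{\rhoTS T}\int_0^{\rhoTS T}P(t)\,\mathrm{d}t\le\Peh$. Using \eqref{RFEH_model} with $\Prx(t)=\nu P(t)/d^{\lambda}$, where $\nu=|h|^2$ is constant over the block, the harvested energy of a profile $P(\cdot)$ under gain $\nu$ is
\begin{equation}
\mathcal{E}[P;\nu]=\eta\int_0^{\rhoTS T}\Big(\tfrac{\nu P(t)}{d^{\lambda}}-\Pth\Big)^{\!+}\mathrm{d}t .
\end{equation}
Because the integrand is nondecreasing in $P(t)$, any feasible profile with strict slack in the average budget can be raised on a set of positive measure without decreasing $\mathcal{E}$, so without loss of generality it suffices to maximize $\mathcal{E}[P;\nu]$ over all $P(\cdot)$ with $\int_0^{\rhoTS T}P(t)\,\mathrm{d}t=\Peh\rhoTS T$ and $0\le P(t)\le\Ppeak$.

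The crux is a pointwise convexity bound. The map $g(z)=\eta(\nu z/d^{\lambda}-\Pth)^{+}$ is convex and nondecreasing on $[0,\Ppeak]$, and $g(0)=0$ since $\Pth\ge 0$. Writing $z=(1-\tfrac{z}{\Ppeak})\cdot 0+\tfrac{z}{\Ppeak}\cdot\Ppeak$ for $z\in[0,\Ppeak]$, convexity gives $g(z)\le\tfrac{z}{\Ppeak}\,g(\Ppeak)$. Setting $z=P(t)$ and integrating,
\begin{equation}
\mathcal{E}[P;\nu]=\int_0^{\rhoTS T}g\big(P(t)\big)\,\mathrm{d}t\le\frac{g(\Ppeak)}{\Ppeak}\int_0^{\rhoTS T}P(t)\,\mathrm{d}t=\frac{\Peh}{\Ppeak}\,\rhoTS T\,\eta\Big(\tfrac{\nu\Ppeak}{d^{\lambda}}-\Pth\Big)^{\!+}.
\end{equation}
This bound is attained by the proposed signal, which transmits at power $\Ppeak$ over an interval of length $\tfrac{\Peh}{\Ppeak}\rhoTS T$ (the ``first $\Peh/\Ppeak$ portion'', which is well defined since $\Peh\le\Ppeak$) and is silent afterwards: its harvested energy equals the right-hand side and it clearly meets both power constraints. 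Hence it is optimal for every realization $\nu$, and therefore, taking expectation over $\nu$ with \eqref{Nakagami_fading}, it also maximizes the average harvested energy (equivalently, the average harvested power) in the PT phase.

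Finally I would add the remark that only the \emph{fraction} of time spent at $\Ppeak$ is pinned down: equality in the convexity bound forces $P(t)\in\{0,\Ppeak\}$ a.e., but any placement of the ``on'' interval within the PT phase yields the same harvested energy because $h$ is constant over the block, so the stated waveform is merely one convenient representative of the optimal set. I do not expect a serious obstacle here — the inequality is a one-line convexity argument — so the step requiring the most care is the preliminary reduction to profiles that exhaust the average-power budget and the verification that $\omega(t)=\sqrt{P(t)}$ is a valid (measurable, peak-limited) energy signal; both are routine.
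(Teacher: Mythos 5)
Your proof is correct, and it takes a somewhat different route from the paper's. The paper restricts attention to an $N$-level power signal, groups the levels into a below-threshold part $(\Plow,\qlow)$ and an above-threshold part $(\Phigh,\qhigh)$ (which loses nothing because the harvested energy is linear in the power above $\Pth$ and zero below it), and then solves the resulting small problem $\mathbf{PA}$ by monotonicity of $\qhigh(\Phigh-\Pth)$ in $\qhigh$ and $\Phigh$, concluding $\Phigh^\star=\Ppeak$, $\qhigh^\star=\Peh/\Ppeak$, $\Plow^\star=0$. You instead work directly with an arbitrary measurable power profile $P(t)\in[0,\Ppeak]$ and use the chord bound $g(z)\le \frac{z}{\Ppeak}g(\Ppeak)$ for the convex, nondecreasing map $g(z)=\eta(\nu z/d^{\lambda}-\Pth)^{+}$ with $g(0)=0$, then integrate and exhibit the bang--bang signal as attaining the bound for every channel realization (hence also in expectation). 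What your approach buys is generality and rigor: it covers arbitrary (not just finite-level) waveforms without the paper's implicit discretization step, it makes the role of convexity of the EH model \eqref{RFEH_model} explicit, and it characterizes the optimizer set ($P(t)\in\{0,\Ppeak\}$ a.e.\ when $g(\Ppeak)>0$ and $\Pth>0$, with the placement of the on-interval immaterial). What the paper's version buys is a shorter, LP-flavored argument on two variables that directly exposes the interpretation in Remark~3 (concentrate power to minimize the energy lost to the sensitivity level). Minor points only: your preliminary reduction to profiles exhausting the average budget is unnecessary (the chord bound combined with $\int_0^{\rhoTS T}P(t)\,\mathrm{d}t\le\Peh\rhoTS T$ already gives the upper bound), and your uniqueness remark silently excludes the degenerate cases $\Pth=0$ or $\nu\Ppeak/d^{\lambda}\le\Pth$, where any feasible profile is trivially optimal; neither affects the validity of the proof.
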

\begin{proof}
Although the proposed optimal energy signal seems simple,
to the best of our knowledge,
prior studies have not provided an optimal energy signal design under the RF-EH model in \eqref{RFEH_model}.
We provide the formal proof in Appendix A.
\end{proof}
\begin{remark}
\textnormal{In order to maximize the harvested energy with the consideration of RF-EH sensitivity level, 
	the available power/energy is concentrated into the shortest possible duration of time, i.e., transmit with $\Ppeak$.
	In this way, the amount of energy that is wasted during the harvesting process is minimized.
}
\end{remark}

Similar to Sec. III-A, the average harvested power for the TS scheme is given by
\begin{equation} \label{TS_ave_EH_power}
\begin{aligned}
\mypowerTS 
&= \eta \rhoTS \frac{\Peh}{\Ppeak} \int_{0}^{\infty} \left( \frac{\Ppeak v}{d^{\lambda}} -\Pth \right)^+ f_{\nu}(v)\mathrm{d}v\\
&= \eta \rhoTS \frac{\Peh}{\Ppeak} \int_{\bar{v}_{\mathrm{peak}}}^{\infty} \left( \frac{\Ppeak v}{d^{\lambda}} -\Pth \right) f_{\nu}(v) \mathrm{d}v\\
& = \eta \rhoTS \frac{\Peh }{d^{\lambda}}\  \Psi,
\end{aligned}
\end{equation}
where $\bar{v}_{\mathrm{peak}} = {\Pth d^{\lambda}}/{\Ppeak}$, and 
\begin{equation} \label{my_Psi}
\begin{aligned}
&\Psi=\exp\left(-m\frac{\Pth d^\lambda}{\Ppeak}\right) \times \\
& \hspace{1.2cm}\left(1 + \sum\limits_{k=0}^{m-1} \left(\Pth \frac{d^{\lambda}}{\Ppeak} \right)^{k+1} \frac{m^k (m-k-1)}{\left(k+1\right)!}\right).
\end{aligned}
\end{equation}
Because of the average power constraint, $\rhoTS\Peh +\left(1-\rhoTS\right) \Pinfo \leq \Pave$,
taking $\rhoTS\Peh = \Pave$ into \eqref{TS_ave_EH_power},
the maximum range of average harvested power is given by
\begin{equation} \label{TS_max_power}
\begin{aligned}
\mypowerTSmax
= \eta  \frac{\Pave }{d^{\lambda}}\  \Psi.
\end{aligned}
\end{equation}

Similar to the PS scheme, from \eqref{TS_ave_EH_power}, we can easily prove that $\mypowerTS$ decreases with increasing $m$. Thus, fading channel with larger variation in its power gain will also increase the EH performance of the TS scheme, as it does for the PS scheme.

\subsection{Information Decoding with the TS Scheme}
Compared to the PS scheme, only $1-\rhoTS$ portion of a block time is used for IT in the TS scheme. 
Since the $\avessr$ in the PS scheme measures the average number of successfully transmitted symbols per unit time,
we consider a normalized $\avessr$ for the TS scheme in order to measure the same quantity, which is given by
\begin{equation} \label{TS_throughput}
\begin{aligned}
	\avessr &= (1-\rhoTS) \int_{v=0}^{\infty} \left(1-\ser(v) \right) f_{\nu}(v) \mathrm{d}v\\
	&= (1-\rhoTS)  \left(1- \aveser\right),
\end{aligned}
\end{equation}
where $\ser(v)$ and $\aveser$ are obtained by taking \eqref{TS_snr} into \eqref{PS_SER} and \eqref{PS_AVE_SER} respectively.

\subsection{Optimal Transmission Strategy for the TS Scheme}
We investigate the problem that given a certain required minimum average harvested power level, $\mypower_0$, what is the maximum achievable average symbol success rate.
For the TS scheme, there are three design parameters: the TS ratio, $\rhoTS$, the power for PT, $\Peh$, and the power for IT, $\Pinfo$.
We consider the following optimization problem:
\begin{equation} \label{TS_Opt_1}
\begin{aligned}
\mathbf{P2}:\ \ &\mymax_{\rhoTS,\Peh,\Pinfo} \ \ \ &&\avessr = (1-\rhoTS)  \left(1-\aveser\right)\\
&\myst\ &&\mypowerTS \geq \mypower_0,  \\
& &&\rhoTS\Peh +\left(1-\rhoTS\right) \Pinfo \leq \Pave,\\
& && 0 \leq \Peh, \varphi_{j}\Pinfo \leq \Ppeak, \\
& && 0 \leq \rhoTS \leq 1,\\
\end{aligned}
\end{equation}
where $\mypower_0$ is no higher than the maximum range of average harvested power $\mypowerTSmax$ defined in~\eqref{TS_max_power}.
$\mypowerTS$ and $\varphi_j$ are defined in \eqref{TS_ave_EH_power} and \eqref{varphi}, respectively.

\begin{proposition}
\textnormal{For the TS scheme, given $\mypower_0$, $\Pave$, $\Ppeak$ and $\Pth$, the optimal parameters are given by
\begin{equation} \label{TS_opti_parameter}
\begin{aligned}
&\rhoTS^{\star} = \frac{\mypower_0 d^{\lambda}}{\eta \Ppeak \Psi},\ 
\Peh^{\star} = \Ppeak,\\
&\Pinfo^{\star} = \frac{1}{1-\rhoTS^{\star}} \left(\Pave - \frac{\mypower_0 d^{\lambda}}{\eta\ \Psi }\right),\\
\end{aligned}
\end{equation}
where $\Psi$ is defined in \eqref{my_Psi}.
}
\end{proposition}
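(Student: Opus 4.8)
The plan is to use monotonicity to strip away the variables $\Peh$ and $\Pinfo$, reduce \textbf{P2} to a one–dimensional maximization in $\rhoTS$, and then show the reduced objective is non-increasing. Introduce the constant $C = \mypower_0 d^{\lambda}/(\eta\,\Psi)$. By \eqref{TS_ave_EH_power} the harvested–power constraint is exactly $\rhoTS\Peh\ge C$, the quantity $\rhoTS^{\star}$ in \eqref{TS_opti_parameter} is precisely $C/\Ppeak$, and $\mypower_0\le\mypowerTSmax$ is equivalent to $C\le\Pave$ (whence $\rhoTS^{\star}\le\Pave/\Ppeak\le 1/3<1$). Since $\ser(v)$ in \eqref{PS_SER} is strictly decreasing in the SNR \eqref{TS_snr}, $\aveser$ is strictly decreasing in $\Pinfo$, so for fixed $\rhoTS$ the objective $(1-\rhoTS)(1-\aveser)$ is strictly increasing in $\Pinfo$ and otherwise independent of $\Peh$; moreover every feasible point has $\rhoTS\ge C/\Ppeak=\rhoTS^{\star}$ (from $\rhoTS\Peh\ge C$ and $\Peh\le\Ppeak$), and the objective is positive only when $\rhoTS<1$, so we may restrict to $\rhoTS\in[\rhoTS^{\star},1)$.

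The next step is to show that at an optimum the EH constraint and the average–power constraint are both tight. If $\rhoTS\Peh>C$, lowering $\Peh$ to $C/\rhoTS$ keeps feasibility and the objective unchanged while making the power budget slack; then raising $\Pinfo$ (until the power constraint or the peak constraint $\varphi_j\Pinfo\le\Ppeak$ binds) strictly increases the objective, a contradiction; a slack power constraint is handled by the same perturbation. In the branch where the average–power constraint becomes the binding one,
\begin{equation}\label{plan_subst}
\Peh=\frac{C}{\rhoTS},\qquad \Pinfo=\frac{\Pave-C}{1-\rhoTS},
\end{equation}
and the objective collapses to the single–variable function $h(\rhoTS)\triangleq(1-\rhoTS)\bigl(1-\aveser\!\bigl(\tfrac{\Pave-C}{1-\rhoTS}\bigr)\bigr)$. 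Here $\Peh\le\Ppeak$ forces $\rhoTS\ge\rhoTS^{\star}$, and at $\rhoTS=\rhoTS^{\star}$ one has from \eqref{plan_subst} that $\Pinfo=\Ppeak(\Pave-C)/(\Ppeak-C)\le\Pave$ (because $\Pave\le\Ppeak$), hence $\varphi_j\Pinfo<3\Pave\le\Ppeak$ by \eqref{PAPR_modulation} and the standing assumption $\Ppeak/\Pave\ge3$; so $\rhoTS^{\star}$ is feasible and is the left endpoint of the feasible interval. The complementary branch, in which $\varphi_j\Pinfo=\Ppeak$ binds before the power constraint, only arises for $\rhoTS$ bounded away from $\rhoTS^{\star}$, where the objective equals $(1-\rhoTS)\bigl(1-\aveser(\Ppeak/\varphi_j)\bigr)$ and is manifestly decreasing in $\rhoTS$; matching the two branches at their common $\rhoTS$ shows this branch is dominated.

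It then remains to prove that $h$ is non-increasing on $[\rhoTS^{\star},\,\cdot\,]$; granting this, the optimum is at $\rhoTS=\rhoTS^{\star}$, which with \eqref{plan_subst} yields $\Peh^{\star}=\Ppeak$ and $\Pinfo^{\star}=(\Pave-C)/(1-\rhoTS^{\star})$, i.e.\ \eqref{TS_opti_parameter}. Writing $u(\rhoTS)=(\Pave-C)/(1-\rhoTS)$ and differentiating gives $h'(\rhoTS)=\aveser(u)-1-u\,\aveser'(u)$, so $h'\le 0$ is equivalent to $u\bigl(-\aveser'(u)\bigr)\le 1-\aveser(u)$. Since $-\aveser'(u)=\int_0^{\infty}\bigl(-\partial_u\ser(v)\bigr)f_{\nu}(v)\,\mathrm{d}v$ and $1-\aveser(u)=\int_0^{\infty}\bigl(1-\ser(v)\bigr)f_{\nu}(v)\,\mathrm{d}v$, it suffices to establish the pointwise bound $u\,\bigl|\partial_u\ser(v)\bigr|\le 1-\ser(v)$ for every $v\ge0$. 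Because $\gamma(v)$ in \eqref{TS_snr} is proportional to $u$, setting $s=\sqrt{2g_j\gamma(v)}$ and letting $\phi$ be the standard normal density, this reduces for $M$-PSK to $s\phi(s)\le 1-2Q(s)$, which holds since $1-2Q(s)=\int_{-s}^{s}\phi(t)\,\mathrm{d}t\ge 2s\phi(s)$ as $\phi$ is decreasing on $[0,\infty)$; for $M$-PAM and $M$-QAM, using $Q(s)\le\tfrac12$ and factoring, one is left with an inequality of the same type. I expect this last step to be the main obstacle: verifying the pointwise SER bound cleanly for all $s\ge0$ rather than only in a high–SNR regime, and in particular absorbing the $Q^{2}$ term in the $M$-QAM expression in \eqref{PS_SER}.
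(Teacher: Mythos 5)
Your proposal is correct and follows essentially the same route as the paper's Appendix~B: use monotonicity to make the harvested-power and average-power constraints tight, reduce to one variable, and observe that optimality of $\Peh^{\star}=\Ppeak$ (i.e., the left endpoint $\rhoTS=\rhoTS^{\star}$) hinges on $(1-\ser(v))/\Pinfo$ being decreasing in $\Pinfo$ --- exactly your pointwise bound $u\,\vert\partial_u \ser\vert\le 1-\ser$ --- with the peak constraint on $\Pinfo$ disposed of at the end via $\Pinfo^{\star}\le\Pave$ and $\Ppeak\ge 3\Pave$, just as the paper does after relaxing it. The step you flag as the main obstacle does go through: for $M$-QAM, $1-\ser(v)=\bigl(1-2(1-1/\sqrt{M})Q(s)\bigr)^2$, so the bound reduces to $2(1-1/\sqrt{M})\,s\,\phi(s)\le 1-2(1-1/\sqrt{M})Q(s)$ (with $\phi$ the Gaussian density), which follows from your PSK inequality $2s\phi(s)\le 1-2Q(s)$; this is all that the paper's one-line verification via $Q'(x)$ amounts to.
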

\begin{proof}
See Appendix B.
\end{proof}
\noindent Taking \eqref{TS_opti_parameter} into \eqref{TS_throughput}, optimal achievable average symbol success rate is obtained.
\begin{remark}
\textnormal{Recall that Proposition 1 suggests to transfer energy at a peak power and then turn the energy signal off during the remaining time if any.
Contrary to this, after performing a joint optimization as done in Proposition 2, the resulting energy signal for PT is transmitted at constant power $\Ppeak$ which occupies the entire PT~phase.
		}
\end{remark}

\section{Numerical Results}
\begin{figure*}[t]
	\renewcommand{\captionfont}{\small} \renewcommand{\captionlabelfont}{\small}
	\minipage{0.47\textwidth}
	\includegraphics[width=\linewidth]{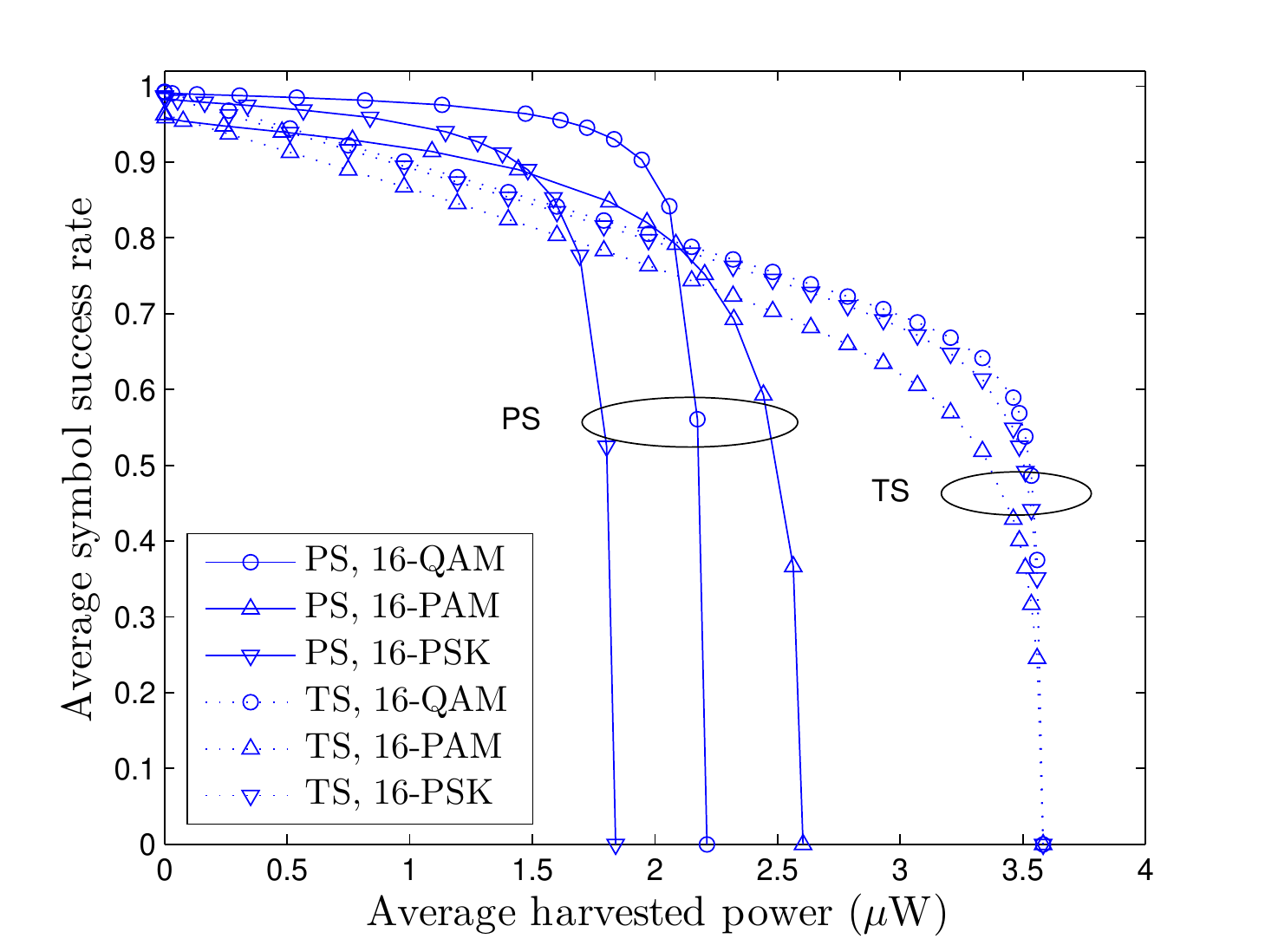}
	\caption{Tradeoff between the average symbol success rate and the average harvested power with different modulation schemes, $M$=16.}\label{fig:M-16}
	\endminipage\hfill
	\minipage{0.47\textwidth}
	\hspace{-0.5cm}
	\includegraphics[width=\linewidth]{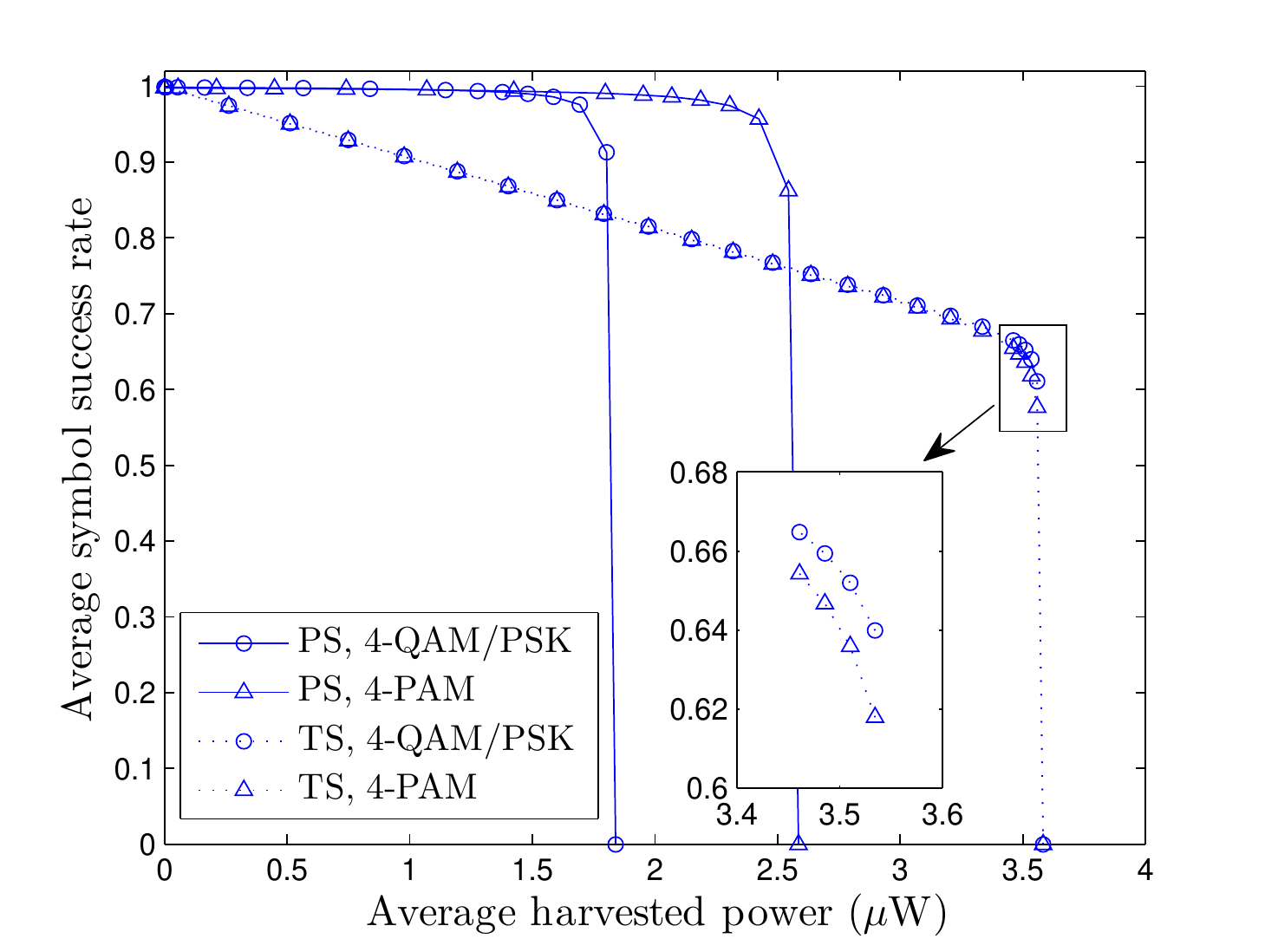}
	\caption{Tradeoff between the average symbol success rate and the average harvested power with different modulation schemes, $M$=4.}\label{fig:M-4}	
	\endminipage\\
	\hrulefill
	\vspace*{-10pt}
\end{figure*}
In this section, we illustrate the tradeoffs between the  average symbol success rate and the average harvested power for the PS and TS schemes, which are obtained by using the optimal transmission strategies given in Sec.~III-C and Sec.~IV-C, respectively. 

We set the average power constraint at the transmitter as $\Pave = 10$~mW, 
peak power constraint as $\Ppeak = 3\Pave$~\cite{Ppeak3},
distance between the transmitter and the receiver as $d =10$~m, 
path loss exponent as $\lambda=3$~\cite{Tianqing}. 
We set the noise power at the receiver ID circuit as $\sigma^2 = -50$~dBm.
The receiver RF-EH conversion efficiency is $0.5$~\cite{WPT_survey}.
Unless otherwise stated,
we set
the modulation order at the transmitter as $M=16$,
the RF-EH sensitivity level as $\Pth=-20$~dBm~\cite{WPT_survey}, 
the Nakagami-$m$ fading parameter as $m=1$, i.e., Rayleigh fading channel.
Under these setting, it can be proved that the average receive power at the receiver is no less than the RF-EH sensitivity level~\cite{ZhangHo}.

\subsection{Effect of Modulation Schemes}
Figs. \ref{fig:M-16} and \ref{fig:M-4} plot the tradeoffs between the average symbol success rate and the average harvested power using PS and TS schemes and different modulation schemes, with $M=16$ and $4$, respectively.



For the PS scheme, from Figs. \ref{fig:M-16} and \ref{fig:M-4}, we see that the different modulation schemes result in different performance tradeoff between ID and EH.
When the required average harvested power is high,
$M$-PAM performs better than $M$-QAM, and $M$-QAM performs better than $M$-PSK.\footnote{When $M=4$, $M$-PSK is equivalent with $M$-QAM.} This order matches their order on PAPR, which is in line with Remark~1.
When the required average harvested power is low, $M$-QAM performs better than $M$-PSK, and $M$-PSK performs better than $M$-PAM. This order matches their order on average symbol success rate in high SNR regime, which is in accordance with Remark~2.
Thus, if we aim to harvest energy, a modulation scheme with higher PAPR performs better than the one with lower PAPR, while if we focus on information decoding, the modulation scheme with the highest PAPR performs the worst.

For the TS scheme, from Figs. \ref{fig:M-16} and \ref{fig:M-4}, we see that the effect of different modulation schemes is smaller than that with the PS scheme. This is because only IT phase is affected by modulation.
Thus, in general (high SNR regime), $M$-QAM performs better than $M$-PSK, and $M$-PSK performs better than $M$-PAM on average symbol success rate, as stated in Remark~2.

Also we see that given a modulation scheme, there is a crossover between the tradeoff curves of the PS and TS schemes. When the average harvested power requirement is high, the TS scheme always results in a better ID performance than the PS scheme. This is due to the optimized energy signal for the TS scheme in Proposition 1, which makes the EH more efficient under the presence of RF-EH sensitivity level, while for the PS scheme, the signal for EH cannot be optimized.
Note that this crossover has not been identified in the rate-energy tradeoff curves studied in~\cite{ZhangHo}.

\subsection{Effect of RF-EH Sensitivity}
Fig. \ref{fig:Pth} 
plots the tradeoffs between the average symbol success rate and the average harvested power under different RF-EH sensitivity level, using $16$-QAM modulation scheme.
We see that given a target average harvested power, the average symbol success rate varies significantly between $\Pth =0$ and practical RF-EH sensitively level, i.e., $\Pth =-20$~dBm. 
When $\Pth =0$, there is no crossover between the tradeoff curves of the PS and TS schemes, and the range of average harvested power is the same between the two schemes. 
This is similar to the results shown in~\cite{ZhangHo}, which ignored the effect of $\Pth$.
If a practical RF-EH sensitivity level is set, e.g., $\Pth = -20$~dBm, the crossover is clearly observed, and we see that different SWIPT schemes result in different ranges of average harvested power. Thus, under a practical RF-EH sensitivity level, the performance tradeoff between ID and EH is very different from that obtained under ideal assumption.
This highlights the importance of considering the RF-EH sensitivity level in this work.

\subsection{Effect of Fading Channel}
Fig. \ref{fig:Fading} plots the tradeoffs between the average symbol success rate and the average harvested power using $16$-QAM modulation scheme, for different Nakagami-$m$ fading channel parameters.
The figure confirms that different values of $m$ do not affect the general trends of the curves,
as identified in Figs. 1-3 for $m=1$. 
We see that if the required average harvested power is high, the channel with larger non-line-of-sight (NLOS) component,
e.g., $m=1$,
leads to better ID performance than the channel with smaller NLOS component, e.g., $m \rightarrow \infty$.
However, if the required average harvested power is low, the channel with larger NLOS component,  leads to worse ID performance than the channel with smaller NLOS component.
This implies that for SWIPT, fading is beneficial for EH but detrimental for ID.
This also verifies our observation in Remark~1.

\section{Conclusion}
In this paper, we studied SWIPT between a transmitter and a receiver, using either PS or TS scheme,
taking practical modulation and receiver RF-EH sensitivity level into account.
For both the PS and TS schemes, we designed the optimal system parameters,
which satisfy transmit average and peak power constraints and maximize the ID performance with a constraint of minimum harvested energy. 
Our analysis showed that 
for the PS scheme,
a modulation scheme with higher PAPR achieves better EH performance.
For the TS scheme, we proposed an optimal energy signal for the PT phase, which maximizes the available harvested energy. 
In addition, 
channel fading is beneficial for EH when considering realistic values of the RF-EH sensitivity level.
The results highlight the importance of  accurately characterizing the impact of RF-EH sensitivity level in SWIPT systems. 
Future work can consider the impact of non-constant RF-EH efficiency for the performance of SWIPT.

\section*{Appendix A: Proof of Proposition 1}
Given channel fading power gain $v$, average power for PT $\Peh$, time for PT $\tau$,
we assume that an energy signal is a $N$-level power signal, with power levels in ascending sort as, $\tilde{P}_i$, $i=1,2,..., N$. $\tilde{P}_i$ holds for $q_i$ percentage of the time duration, $\tau$, and we have
$
\sum_{i=1}^{N} q_i \tilde{P}_i = \Peh.
$

Assuming that $\tilde{P}_{n-1} \leq \Pth$ and $\tilde{P}_{n} > \Pth$, we propose the following optimization problem to find the optimal energy signal which maximize the harvested energy during $\tau$ as:
\begin{equation}
\begin{aligned}
\mathbf{PA}:\ \ &\mymax_{\Plow,\Phigh,\qlow,\qhigh} && \qhigh \left(\Phigh - \Pth\right)\\
&\hspace{15pt}\myst\ && \hspace{-5pt} \qlow \Plow +\qhigh \Phigh = \Peh,  \\
& &&\hspace{-5pt} 0 \leq \Plow \leq \Pth \leq \Phigh \leq \Ppeak,\\
& &&\hspace{-5pt} \qlow +\qhigh = 1, \qlow,\qhigh \geq 0,
\end{aligned}
\end{equation}
where $\qlow = \sum_{i=1}^{n-1} q_i$, $\qhigh = \sum_{i=n}^{N} q_i$, 
$\Plow = \sum_{i=1}^{n-1} q_i \tilde{P}_i/\qlow$
and $\Phigh = \sum_{i=n}^{N} q_i \tilde{P}_i/\qhigh$.

Because the target function of $\mathbf{PA}$ is a monotonically increasing function w.r.t. both $\qhigh$ and $\Phigh$, we let $\qlow \Plow =0$ in the first constraint of $\mathbf{PA}$, i.e., $\qhigh \Phigh = \Peh$. Then, $\mathbf{PA}$ can be easily solved as
$
\Phigh^\star = \Ppeak,  \qhigh^{\star} = \Peh/\Ppeak,
\Plow^\star = 0, \text{ and } \qlow^{\star} = 1-\Peh/\Ppeak.
$
Therefore, an optimized energy signal is proposed in Proposition~1.
\begin{figure}[t]
	\renewcommand{\captionlabeldelim}{ }
	\renewcommand{\captionfont}{\small} \renewcommand{\captionlabelfont}{\small}
	\centering 
	\includegraphics[scale=0.55]{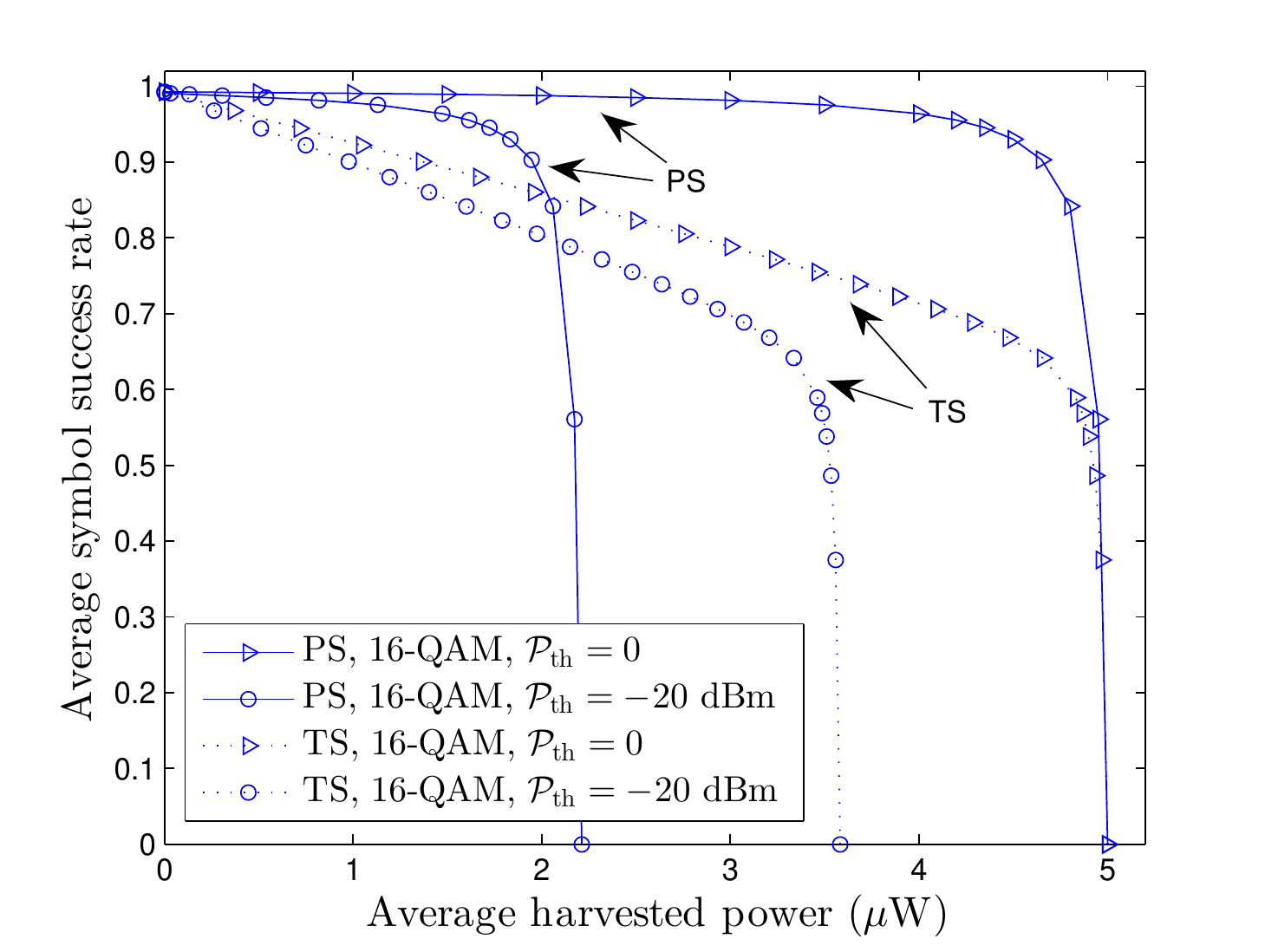}	
	\caption{Tradeoff between the average symbol success rate and the average harvested power with different RF-EH sensitivity level.}	
	\label{fig:Pth}
	\vspace*{-14pt}
\end{figure}

\section*{Appendix B: Proof of Proposition 2}
In order to solve $\mathbf{P2}$, we first relax the peak power constraint for $\Pinfo$, i.e., 
replace $\varphi_i \Pinfo \leq \Ppeak$ with $\Pinfo \leq \Ppeak$. 

From \eqref{PS_SER}, $1-\ser(v)$ is a monotonically increasing function w.r.t. $\Pinfo$. 
Thus, from \eqref{TS_throughput}, the target function of $\mathbf{P2}$, $\avessr$, is a monotonically increasing and decreasing function w.r.t. $\Pinfo$ and $\rhoTS$, respectively.
Therefore, the first and the second constraints of $\mathbf{P2}$ are active constraints, i.e., the equalities hold, which are given by (after simplification)
\begin{align}
\label{cons_1}
&{\rhoTS \Peh } = \frac{\mypower_0 d^{\lambda}}{\eta \Psi},\\
\label{cons_2}
&(1-\rhoTS)  = \frac{1}{\Pinfo} \left(\Pave -\frac{\mypower_0 d^{\lambda}}{\eta \Psi}\right),
\end{align}
respectively. Taking \eqref{cons_2} into \eqref{TS_throughput}, we have
\begin{equation} \label{App_ssr}
\avessr = \int_{0}^{\infty} \frac{1}{\Pinfo} \left(1-\ser(v)\right) f_{\nu}(v) \mathrm{d} v.
\end{equation}
Given $v$, taking \eqref{PS_SER} into \eqref{App_ssr} and using the property that $Q'(x) = -\frac{1}{\sqrt{2\pi}} \exp\left(-x^2/2\right)$, we verify that $\frac{1}{\Pinfo} \left(1-\ser(v)\right)$ is a monotonically decreasing function w.r.t. $\Pinfo$. Thus, $\avessr$ in \eqref{App_ssr} monotonically decreases with $\Pinfo$.
From \eqref{cons_2}, in order to minimize $\Pinfo$, we need to find the minimum $\rhoTS$, which can be obtained by letting $\Peh = \Ppeak$ in \eqref{cons_1}.
Thus, the optimal $\rhoTS$ can be derived, and then the optimal $\Pinfo$ is obtained from \eqref{cons_2}, which are shown in Proposition~2.

From the above solution, we obtain the optimal $\Pinfo$ which is no larger than $\Pave$. Based on our explanation for peak power constraint in Sec. II-B, the optimal $\Pinfo$ satisfies the constraint $\varphi_i \Pinfo \leq \Ppeak$ naturely. Thus,  the calculated optimal $\Pinfo$ is exactly the global optimal $\Pinfo$.

\begin{figure}[t]
	\renewcommand{\captionlabeldelim}{ }
	\renewcommand{\captionfont}{\small} \renewcommand{\captionlabelfont}{\small}
	\centering 
	\includegraphics[scale=0.55]{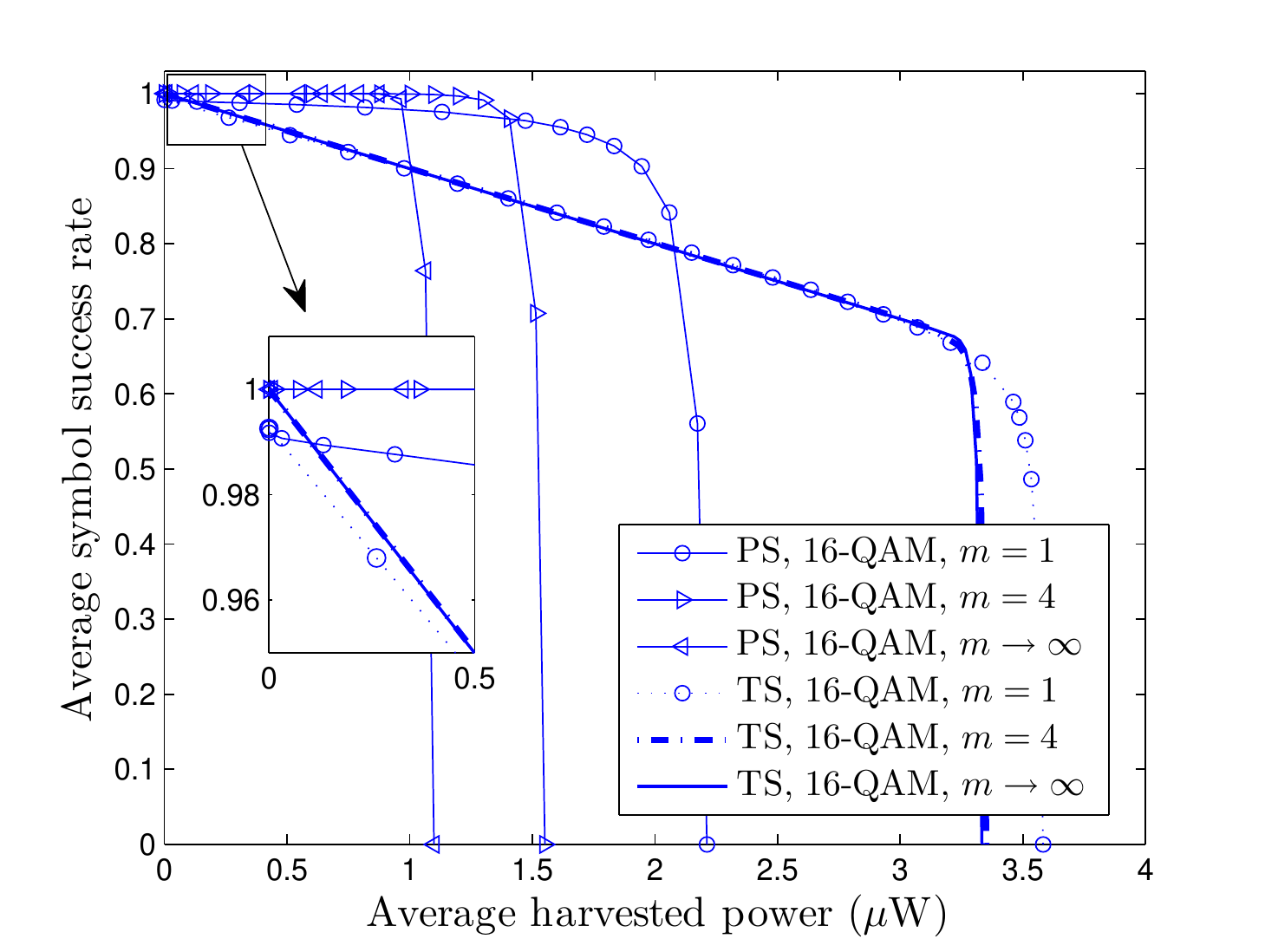}	
	\caption{Tradeoff between the average symbol success rate and the average harvested power with different Nakagami-$m$ fading parameters.}	
	\label{fig:Fading}
	\vspace*{-14pt}
\end{figure}

\ifCLASSOPTIONcaptionsoff
\fi


\end{document}